\documentclass[journal]{IEEEtran}

\ifCLASSINFOpdf
\else
   \usepackage[dvips]{graphicx}
\fi
\usepackage{url}
\usepackage{graphicx}
\usepackage{algorithm}
\usepackage{algpseudocode}
\usepackage{amssymb}
\usepackage{xcolor}
\usepackage{tabularx}
\usepackage{booktabs}
\usepackage{mathtools}
\usepackage{bm}

\newcommand{\E}{\mathbb{E}}
\newcommand{\tr}{\operatorname{tr}}

\newcommand{\Var}{\operatorname{Var}}
\newcommand{\CRB}{\mathrm{CRB}}

\newtheorem{proposition}{Proposition}
\newtheorem{theorem}{Theorem}
\newtheorem{remark}{Remark}

\usepackage{amsmath,amsfonts}
\usepackage{array}
\usepackage[caption=false,font=normalsize,labelfont=sf,textfont=sf]{subfig}
\usepackage{textcomp}
\usepackage{stfloats}
\usepackage{url}
\usepackage{verbatim}
\usepackage{hyperref}
\hypersetup{
    colorlinks=true,
    linkcolor=blue,
    citecolor=blue,
    urlcolor=blue}

\begin{document}

\title{Continuous Intra-Symbol Phase Noise Tracking for THz OFDM via Polynomial Reconstruction}

\author{Sawatsakorn Chaiyasoonthorn, Ura Klongklaew, and Phichai Youplao
\thanks{This is the accepted version of the article published in IEEE Signal Processing Letters. The final version is available at https://doi.org/10.1109/LSP.2026.3711541. © 2026 IEEE.}
}

\maketitle

\begin{abstract}
Terahertz (THz) communication systems for sixth-generation (6G) networks are severely impaired by Wiener phase noise (WPN), whose innovation variance at sub-THz carriers is substantially larger than in millimeter-wave 5G systems. Conventional common-phase-error (CPE) compensation applies a single phase rotation per OFDM symbol and becomes inadequate when the phase trajectory varies significantly within the symbol duration. This letter proposes continuous phase trajectory reconstruction (CPTR), a closed-form intra-symbol phase noise tracking method that reconstructs the sample-level phase trajectory from pilot observations via least-squares polynomial fitting with $\mathcal{O}(N_p+N)$ complexity. We characterize the polynomial approximation error under WPN and derive the Cram\'{e}r--Rao bound (CRB) for polynomial phase coefficient estimation, showing that CPTR is minimum-variance unbiased within the polynomial surrogate model. Simulations at 300~GHz with \textit{N}~=~1024 and 16-QAM show that CPTR remains within 0.2~dB of the CRB across SNR~=~10--45~dB while achieving significantly lower complexity than Kalman-based tracking and substantial BER gains over CPE, linear interpolation, and cubic spline methods.
\end{abstract}

\begin{IEEEkeywords}
Phase noise, THz communications, OFDM, least-squares estimation, Cram\'{e}r--Rao bound.
\end{IEEEkeywords}

\IEEEpeerreviewmaketitle

\section{Introduction}
\label{sec:intro}

\IEEEPARstart{T}{erahertz} communications (0.1--10~THz) are a key enabler of sixth-generation (6G) networks \cite{Han2022, Sharma2025}, supporting aggregate data rates beyond 1~Tbps through ultra-wide bandwidths \cite{Akyildiz2020, Rappaport2019, Sarieddeen2021}. However, local-oscillator (LO) instability at these frequencies introduces WPN with innovation variance $\sigma_{\Delta}^{2}\approx 4\pi^{2}f_{c}^{2}c_0T_s$, where $c_0$ is the Lorentzian linewidth coefficient and $T_s$ is the sampling period \cite{Khanzadi2014, Bicais2019}. At $300$~GHz and $20$~GHz bandwidth, the resulting intra-symbol phase drift makes inter-carrier interference (ICI) a dominant impairment \cite{Pollet1995, Wu2004}.

Conventional CPE compensation applies a single phase rotation per OFDM symbol, effectively treating the phase trajectory as constant within each OFDM symbol~\cite{Petrovic2007, Stefanatos2017}. Polynomial phase modeling has been considered heuristically~\cite{Corvaja1997, Giannakis1998, Do2025} without rigorous characterization of approximation error or statistical efficiency under WPN. Kalman-based methods~\cite{Rabiei2010, Mehrpouyan2012} improve estimation accuracy but incur $\mathcal{O}(Nd^{2})$ complexity, where $d$ is the state dimension. Pilot-aided phase-noise estimation methods~\cite{Mathecken2017, Chen2025} remain iterative or frequency-domain oriented, while deterministic interpolation schemes (e.g., linear or cubic spline) suffer from boundary artifacts and lack statistical optimality. Alternatively, Bayesian approaches, including Wiener interpolation filters~\cite{Chang2025,Bello2023}, exploit prior statistical information of the phase-noise process for estimation. Existing methods, however, do not jointly provide closed-form low-complexity estimation, rigorous statistical characterization, and communication-level BER validation. To bridge this gap, this letter proposes a deterministic polynomial least-squares framework for continuous intra-symbol phase reconstruction. The main contributions are:
\begin{enumerate}
  \item \textit{Continuous intra-symbol phase reconstruction:}
        CPTR reconstructs the sample-level phase trajectory
        via a single least-squares solve of dimension
        $(P{+}1)\!\times\!(P{+}1)$, independent of the FFT size $N$.

  \item \textit{Approximation error characterization:}
        We derive an asymptotic upper bound on the polynomial
        approximation error under WPN
        (Proposition~\ref{prop:approx}).

  \item \textit{CRB analysis and MVU optimality:}
        We derive the CRB for polynomial phase coefficient
        estimation and prove that CPTR is MVU within the
        polynomial surrogate model
        (Theorems~\ref{thm:crb}--\ref{thm:mvu}).

  \item \textit{Communication-level validation:}
        Extensive simulations at 300~GHz with 16-QAM demonstrate
        BER gains over CPE, interpolation,
        cubic-spline, and EKF baselines.
\end{enumerate}

\section{System Model}
\label{sec:model}

Consider a THz OFDM system~\cite{Tarboush2021}
with $N$ subcarriers, CP length $N_{\mathrm{cp}}$,
and symbol duration
$T_{\mathrm{sym}}=(N+N_{\mathrm{cp}})T_s$.
The transmitted baseband signal is
\begin{equation}\label{eq:tx}
  x(t)=\frac{1}{\sqrt{N}}
  \sum_{k=0}^{N-1}X_k e^{j2\pi k\Delta f t},
  \quad 0\le t<NT_s,
\end{equation}
where $X_k\in\mathcal{X}$ and
$\Delta f=1/(NT_s)$.
After LO mixing, the received signal is
\begin{equation}\label{eq:rx}
  y(t)=x(t)e^{j\phi(t)}+w(t),
\end{equation}
where $w(t)\sim\mathcal{CN}(0,N_0)$ and $\phi(t)$ denotes the LO phase noise. Although wideband oscillators may exhibit both correlated Wiener and uncorrelated Gaussian components~\cite{Khanzadi2014,Chang2025}, we focus on the Wiener component to enable closed-form polynomial reconstruction and CRB characterization. The discrete-time phase noise process $\phi[n]\triangleq\phi(nT_s)$ follows
\begin{equation}\label{eq:wpn}
  \phi[n]=\phi[n-1]+\delta[n],\quad
  \delta[n]\sim\mathcal{N}(0,\sigma_{\Delta}^{2}),
\end{equation}
with $\sigma_{\Delta}^{2}=4\pi^{2}f_c^{2}c_0T_s$. The intra-symbol phase excursion variance is
\begin{equation}\label{eq:excursion}
  \Var[\phi[N-1]-\phi[0]]
  =(N-1)\sigma_{\Delta}^{2}.
\end{equation}
At $300$~GHz with $N=1024$ and $T_s=50$~ps,
the excursion standard deviation reaches
approximately $0.43$~rad, exceeding the validity
range of CPE-only compensation.

After CP removal and FFT, the received subcarrier is
\begin{equation}\label{eq:yk}
  Y_k=H_kX_kI_0
      +\sum_{\substack{m=0\\m\ne k}}^{N-1}
      H_mX_mI_{k-m}+W_k,
\end{equation}
where $H_k \in \mathbb{C}$ is the channel coefficient at subcarrier~$k$,
assumed known at pilot positions, following~\cite{Rabiei2010, Chen2025}, and
\begin{equation}
  I_\ell=\frac{1}{N}\sum_{n=0}^{N-1}
  e^{j\phi[n]}e^{-j2\pi \ell n/N}
\end{equation}
denotes the phase-noise coefficient.
Assuming $N_p$ equally spaced pilot subcarriers
$\mathcal{P}=\{p_0,\ldots,p_{N_p-1}\}$,
the pilot phase observation is
\begin{equation}\label{eq:obs}
  \tilde{\phi}_i=
  \angle\!\left(
  \frac{Y_{p_i}}{H_{p_i}X_{p_i}}
  \right),
  \quad i=0,\ldots,N_p-1.
\end{equation}
Under moderate-to-high SNR, linearization yields
an approximately Gaussian observation model with
variance $\sigma_n^{2}=N_0/(2E_s)$.

\section{Proposed CPTR Estimator}
\label{sec:cptr}

\subsection{Polynomial Phase Trajectory Model}

Over a single OFDM symbol, the WPN trajectory evolves smoothly
and is dominated by low-frequency components, motivating a
low-order polynomial surrogate.

\begin{proposition}[Polynomial Approximation Error Bound]
\label{prop:approx}
Let $\hat{\phi}(n)=\sum_{p=0}^{P}a_p\psi_p(n)$, $\psi_p(n)=(n/N)^p$, denote the degree-$P$ polynomial approximation of the WPN trajectory $\phi[n]$. Then, asymptotically for $P\ll N$, the mean-squared approximation error satisfies
\begin{equation}\label{eq:bound}
  \frac{1}{N}\sum_{n=0}^{N-1}
  \E\{|\varepsilon[n]|^{2}\}
  \lesssim
  C\,\sigma_{\Delta}^{2}N(P+1)^{-1},
\end{equation}
where $\varepsilon[n]=\phi[n]-\hat{\phi}(n)$ and $C>0$ is independent of $N$ and depends only on the polynomial basis conditioning. This result indicates that increasing the polynomial order reduces the approximation error algebraically, albeit with diminishing returns for large $P$.
\end{proposition}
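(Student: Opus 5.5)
The plan is to treat $\hat{\phi}$ as the least-squares (orthogonal) projection of the sampled trajectory onto the span of $\{\psi_p\}_{p=0}^{P}$. This choice is pointwise optimal, so any other degree-$P$ polynomial gives an upper bound. The mean-squared error then becomes a trace of a residual covariance:
\begin{equation*}
\frac{1}{N}\sum_{n=0}^{N-1}\E\{|\varepsilon[n]|^2\}
=\frac{1}{N}\tr\!\bigl((\mathbf{I}-\mathbf{\Pi}_P)\mathbf{R}_\phi\bigr),
\end{equation*}
where $\mathbf{\Pi}_P$ is the projector onto the column space of the $N\times(P+1)$ matrix $[\psi_p(n)]$. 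Here $\mathbf{R}_\phi$ is the covariance of $\phi[n]-\phi[0]$, with entries $\sigma_\Delta^2\min(n,m)$ by (\ref{eq:wpn}). The initial offset $\phi[0]$ is absorbed by the constant term $\psi_0$, since the projection reproduces constants exactly.

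Next I pass to the continuum limit $P\ll N$. Write $t=n/N\in[0,1)$, so that $\phi[\lfloor tN\rfloor]-\phi[0]\approx \sigma_\Delta\sqrt{N}\,B(t)$, where $B$ is a standard Brownian motion. The normalized trace then converges to
\begin{equation*}
\sigma_\Delta^2 N\;\E\!\int_0^1 \bigl|B(t)-(\Pi_P B)(t)\bigr|^2dt ,
\end{equation*}
where $\Pi_P$ is the $L^2[0,1]$ projection onto polynomials of degree at most $P$. This step produces the factor $\sigma_\Delta^2N$ in (\ref{eq:bound}). The error of replacing the discrete Gram matrix by its integral counterpart is $\mathcal{O}(P^2/N)$, which is negligible for $P\ll N$. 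Any finite-$N$ deviation in the conditioning of the monomial Gram matrix is absorbed into $C$.

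The remaining task is to bound $\E\|B-\Pi_PB\|^2_{L^2}$ by $C/(P+1)$. I will expand $B$ in shifted Legendre polynomials $\tilde L_k$ on $[0,1]$, which are orthonormal. The monomials $\psi_p$ span the same space as $\tilde L_0,\dots,\tilde L_P$, so the residual is $\sum_{k>P}\E|\langle B,\tilde L_k\rangle|^2$. Using $B(t)=\int_0^1\mathbf{1}\{s<t\}\,dB(s)$ and the Itô isometry, I get
\begin{equation*}
\E|\langle B,\tilde L_k\rangle|^2=\int_0^1 G_k(s)^2ds,\qquad G_k(s)=\int_s^1\tilde L_k(t)\,dt .
\end{equation*}
The integrated Legendre identity, with $(2k+1)L_k=L'_{k+1}-L'_{k-1}$, expresses $G_k$ as a combination of $\tilde L_{k+1}$ and $\tilde L_{k-1}$ with coefficients of order $1/k$. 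Hence $\int G_k^2=\Theta(k^{-2})$, and the tail sum over $k>P$ is $\Theta(1/(P+1))$.

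I expect the main obstacle to be making this identity and its normalization constants explicit, for example obtaining $\int G_k^2=\frac{1}{2(2k-1)(2k+3)}$-type expressions. It also requires justifying that the Karhunen–Loève-type interchange of sum and expectation is legitimate. The conditioning constant $C$ would then collect the Legendre normalization together with the discrete-to-continuous Gram mismatch. That is why the statement is only asymptotic ($\lesssim$).
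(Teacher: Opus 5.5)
Your argument is correct, and it takes a genuinely different route from the paper's.

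\textbf{The paper's route.} The paper invokes the Karhunen--Lo\`eve expansion of the Wiener process. It sums the tail of the eigenvalues $\lambda_p=\mathcal{O}(p^{-2})$ to get $\mathcal{O}((P+1)^{-1})\sigma_\Delta^2N$. It then carries this over to the monomial basis via a norm-equivalence remark about Legendre versus monomial coordinates.

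\textbf{Your route.} You compute the polynomial projection error itself. With $\tilde L_k(t)=\sqrt{2k+1}\,P_k(2t-1)$ and $\int_{-1}^{x}P_k=(P_{k+1}-P_{k-1})/(2k+1)$, one gets exactly $\int_0^1G_k^2=\frac{1}{2(2k-1)(2k+3)}$ for $k\ge1$. The tail then telescopes to $\frac{1}{8}\bigl(\frac{1}{2P+1}+\frac{1}{2P+3}\bigr)=\frac{P+1}{2(4(P+1)^2-1)}$. This checks against $\E\|B-\bar B\|^2=\tfrac16$ at $P=0$. The sum--expectation interchange you flagged is just Tonelli, since all terms are nonnegative.

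\textbf{What your route buys.} The KL truncation error is the minimum over \emph{all} $(P+1)$-dimensional subspaces, so on its own it only lower-bounds the polynomial projection error. The paper's basis-change remark relates two coordinate systems for the same polynomial subspace, and the orthogonal projection is basis-independent anyway. It does not bridge the gap between that subspace and the sinusoidal KL eigenspace. Your Legendre computation supplies exactly this missing upper bound. In the continuum limit it gives $C=1/6$ for all $P$, with $C\to1/8$ as $P$ grows (versus $1/\pi^2$ for the KL optimum). It also gives a matching lower bound, so the rate is $\Theta((P+1)^{-1})$. The paper's argument is shorter and rests on a textbook fact, but it needs your calculation to close.

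\textbf{Two small points.} First, the LS fit is optimal in the discrete $\ell^2$ norm per realization, not pointwise. Second, you need not assert an $\mathcal{O}(P^2/N)$ Gram perturbation. Instead, embed $B$ so that $\sigma_\Delta\sqrt{N}\,B(n/N)=\phi[n]-\phi[0]$, and insert the sampled continuous projection $\phi[0]+\sigma_\Delta\sqrt{N}(\Pi_PB)(n/N)$ into the discrete LS objective. This gives the upper bound directly and reduces the discrete-to-continuum step to a Riemann-sum limit of $\E|B(t)-(\Pi_PB)(t)|^2$.
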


A cubic model ($P=3$) captures more than $99\%$ of the
intra-symbol phase variance at $300$~GHz while maintaining
numerical stability.

\subsection{Closed-Form Least-Squares Estimator}

Stacking the pilot observations into
$\tilde{\bm{\phi}}\in\mathbb{R}^{N_p}$ yields
\begin{equation}\label{eq:linmod}
  \tilde{\bm{\phi}}
  =\bm{\Psi}\bm{a}+\bm{\varepsilon},
\end{equation}
where
$\bm{a}=[a_0,\ldots,a_P]^\top$,
$[\bm{\Psi}]_{i,p}=\psi_p(p_i/N)$,
and
$\bm{\varepsilon}\sim
\mathcal{CN}(\bm{0},\sigma_n^2\mathbf{I})$.
The CPTR estimate is
\begin{equation}\label{eq:cptr}
  \hat{\bm{a}}
  =
  (\bm{\Psi}^\top\bm{\Psi})^{-1}
  \bm{\Psi}^\top\tilde{\bm{\phi}}.
\end{equation}

The reconstructed phase trajectory is
$\hat{\phi}[n]=\bm{\psi}(n)^\top\hat{\bm{a}}$,
where
$\bm{\psi}(n)=[1,(n/N),\ldots,(n/N)^P]^\top$,
and compensation is applied as
$\tilde{y}[n]=y[n]e^{-j\hat{\phi}[n]}$.

\begin{remark}[Coefficient-Space Comparison]
\label{rem:coef}
To compare CPTR and EKF on a common basis
in the polynomial coefficient domain of the surrogate
model~\eqref{eq:linmod}, we define the following
projection metric used in Section~\ref{sec:sims}.
CPTR directly produces $\hat{\bm{a}}$ via~\eqref{eq:cptr}.
For EKF, which operates recursively in the polynomial
state space, the reconstructed trajectory is projected
onto the same basis as
\begin{equation}\label{eq:proj_ekf}
  \hat{\bm{a}}_{\mathrm{EKF}} =
  (\bm{\Psi}^\top\bm{\Psi})^{-1}\bm{\Psi}^\top
  \hat{\bm{\phi}}_{\mathrm{EKF}}\big|_{\mathcal{P}},
\end{equation}
where $\hat{\bm{\phi}}_{\mathrm{EKF}}|_{\mathcal{P}}$
denotes the EKF phase estimate evaluated at pilot
positions. The reference coefficients are obtained by projecting
the true WPN trajectory onto the polynomial basis:
\begin{equation}\label{eq:aLS}
  \bm{a}_{\mathrm{LS}} =
  (\bm{\Psi}^\top\bm{\Psi})^{-1}\bm{\Psi}^\top
  \bm{\phi}_{\mathrm{true}}\big|_{\mathcal{P}}.
\end{equation}
The normalized coefficient MSE is then
\begin{equation}\label{eq:coef_mse}
  \mathrm{MSE}_{\mathrm{coef}} =
  \frac{1}{P+1}\,
  \mathbb{E}\bigl\|\hat{\bm{a}}_{\mathrm{m}}
  - \bm{a}_{\mathrm{LS}}\bigr\|^{2},
\end{equation}
where $\hat{\bm{a}}_{\mathrm{m}} \in
\{\hat{\bm{a}},\,\hat{\bm{a}}_{\mathrm{EKF}}\}$.
Non-parametric methods (linear interpolation, cubic spline)
are evaluated only in the trajectory domain and are
omitted from the coefficient-space comparison.
\end{remark}

The dominant computational cost of CPTR consists of three stages. Forming the Gram matrix requires $\mathcal{O}(N_pP^2)$ operations, matrix inversion requires $\mathcal{O}(P^3)$, and trajectory reconstruction requires $\mathcal{O}(NP)$. Since $P\le4$ is fixed, the overall complexity scales as $\mathcal{O}(N_p+N)$. Table~\ref{tab:complexity} compares CPTR with existing methods.

\begin{table}[!t]
\renewcommand{\arraystretch}{1.15}
\caption{Computational Complexity Comparison per OFDM Symbol}
\label{tab:complexity}
\centering
\begin{tabularx}{\columnwidth}{p{1.2in}p{1.2in}X}
\toprule
\textbf{Method} & \textbf{Multiplications} & \textbf{Order} \\
\midrule
CPE (zeroth-order)       & $2N$               & $\mathcal{O}(N)$     \\
Linear interpolation     & $2N$               & $\mathcal{O}(N)$     \\
Cubic spline             & $\sim 8N_p+2N$     & $\mathcal{O}(N_p+N)$ \\
EKF \cite{Rabiei2010}    & $\sim c_1 d^2 N$   & $\mathcal{O}(N d^2)$ \\
\textbf{CPTR (proposed)} & $\bm{2N + 6N_p}$   & $\mathcal{O}(N_p+N)$ \\
\bottomrule
\multicolumn{3}{l}{\footnotesize $d$: EKF state dimension; spline uses natural boundary conditions.}
\end{tabularx}
\end{table}

\section{Cram\'{e}r--Rao Bound Analysis}
\label{sec:crb}

Under the linear-Gaussian model~\eqref{eq:linmod},
the Fisher information matrix (FIM) for $\bm{a}$ is
\begin{equation}\label{eq:fim}
  \mathbf{J}(\bm{a})
  =
  \frac{1}{\sigma_n^2}
  \bm{\Psi}^\top\bm{\Psi}.
\end{equation}

This reveals that pilot placement directly conditions estimation efficiency through the spectrum of the Gram matrix $\bm{\Psi}^\top\bm{\Psi}$. Uniform pilot placement therefore improves numerical conditioning and reduces estimator variance by minimizing correlation among polynomial basis functions.

\begin{theorem}[CRB for Polynomial Phase Estimation]
\label{thm:crb}
For $N_p\ge P+1$ uniformly spaced pilots,
the CRB on the $p$-th polynomial coefficient is
\begin{equation}\label{eq:crb_coef}
  \CRB(a_p)
  =
  \sigma_n^2
  \bigl[
  (\bm{\Psi}^\top\bm{\Psi})^{-1}
  \bigr]_{pp}.
\end{equation}
The average coefficient CRB is
\begin{equation}\label{eq:crb_avg}
  \CRB_{\mathrm{coef}}
  =
  \frac{\sigma_n^2}{P+1}
  \tr\!\left[
  (\bm{\Psi}^\top\bm{\Psi})^{-1}
  \right].
\end{equation}
\end{theorem}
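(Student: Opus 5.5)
The plan is the standard linear-Gaussian CRB argument applied to the surrogate model~\eqref{eq:linmod}. Model the stacked pilot observations as $\tilde{\bm{\phi}}=\bm{\Psi}\bm{a}+\bm{\varepsilon}$, where $\bm{\varepsilon}$ is real Gaussian with covariance $\sigma_n^2\mathbf{I}$. The phase observations are real after the high-SNR linearization, so the $\mathcal{CN}$ notation in~\eqref{eq:linmod} will be read as a real Gaussian with per-component variance $\sigma_n^2$. The log-likelihood is then
\begin{equation*}
  \ln p(\tilde{\bm{\phi}};\bm{a})
  = \mathrm{const}-\frac{1}{2\sigma_n^2}\bigl\|\tilde{\bm{\phi}}-\bm{\Psi}\bm{a}\bigr\|^2 .
\end{equation*}
Differentiating twice in $\bm{a}$ gives $-\bm{\Psi}^\top\bm{\Psi}/\sigma_n^2$, which does not depend on the data. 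Taking the negative expectation therefore reproduces the FIM~\eqref{eq:fim} directly. The regularity conditions for the CRB hold trivially, because the density is smooth in $\bm{a}$ and its support does not depend on $\bm{a}$.

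The only step needing genuine argument is the invertibility of $\bm{\Psi}^\top\bm{\Psi}$. This is where the hypotheses $N_p\ge P+1$ and uniform spacing enter, and I expect it to be the main, though mild, obstacle. The matrix $\bm{\Psi}$ has entries $(p_i/N)^p$, so it is a rectangular Vandermonde matrix with nodes $x_i=p_i/N$. Uniformly spaced pilots have distinct nodes. Any $(P+1)\times(P+1)$ submatrix formed from $P+1$ distinct nodes has determinant $\prod_{i<j}(x_j-x_i)\neq 0$, so $\bm{\Psi}$ has full column rank $P+1$. It follows that $\bm{\Psi}^\top\bm{\Psi}$ is positive definite, and $\mathbf{J}(\bm{a})^{-1}=\sigma_n^2(\bm{\Psi}^\top\bm{\Psi})^{-1}$ exists. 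Uniformity is really only needed for distinctness of the nodes; its further role is in conditioning, as the preceding discussion notes.

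Next I invoke the vector CRB: for any unbiased estimator, $\mathrm{Cov}(\hat{\bm{a}})\succeq\mathbf{J}^{-1}(\bm{a})$. Taking the $(p,p)$ diagonal entry of both sides gives $\Var(\hat a_p)\ge\sigma_n^2[(\bm{\Psi}^\top\bm{\Psi})^{-1}]_{pp}$, which is~\eqref{eq:crb_coef}. Summing these diagonal bounds over $p=0,\ldots,P$ and dividing by $P+1$ gives the trace expression~\eqref{eq:crb_avg}. This average bounds the normalized coefficient MSE~\eqref{eq:coef_mse} for unbiased estimators.

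Finally, I will note that the bound is attained within the surrogate model. Substituting~\eqref{eq:linmod} into~\eqref{eq:cptr} shows that $\hat{\bm{a}}$ is unbiased, with covariance $\sigma_n^2(\bm{\Psi}^\top\bm{\Psi})^{-1}$. This also follows from the score factorization $\partial\ln p/\partial\bm{a}=\mathbf{J}(\bm{a})(\hat{\bm{a}}-\bm{a})$. This observation is what sets up Theorem~\ref{thm:mvu}.
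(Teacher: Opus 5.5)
Your proposal is correct and follows the same route as the paper: the paper takes the linear-Gaussian FIM $\mathbf{J}(\bm{a})=\bm{\Psi}^\top\bm{\Psi}/\sigma_n^2$ in \eqref{eq:fim}, justified by the Slepian--Bangs formula, and reads the bound off the diagonal and the normalized trace of its inverse. Your Vandermonde full-rank argument for invertibility and your reading of the noise as real Gaussian make explicit what the paper leaves implicit; the real-Gaussian reading is the one under which the $1/\sigma_n^2$ scaling holds, whereas a literal $\mathcal{CN}$ model with real coefficients would give $2/\sigma_n^2$.
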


\begin{theorem}[MVU Efficiency of CPTR]
\label{thm:mvu}
The proposed CPTR estimator~\eqref{eq:cptr} is unbiased and achieves
the CRB under the polynomial surrogate model:
\begin{equation}\label{eq:mvu}
  \mathrm{Cov}(\hat{\bm{a}})
  =
  \sigma_n^2
  (\bm{\Psi}^\top\bm{\Psi})^{-1}
  =
  \mathbf{J}^{-1}(\bm{a}).
\end{equation}
Hence, CPTR is MVU within the surrogate model.
\end{theorem}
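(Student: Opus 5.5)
The plan is to work entirely inside the linear-Gaussian surrogate model~\eqref{eq:linmod}. Here $\tilde{\bm{\phi}}=\bm{\Psi}\bm{a}+\bm{\varepsilon}$, with $\bm{a}$ deterministic and $\bm{\varepsilon}$ zero-mean with covariance $\sigma_n^2\mathbf{I}$. Since the phase observations are real, I will read the noise as real Gaussian, $\mathcal{N}(\bm{0},\sigma_n^2\mathbf{I})$, which is consistent with the FIM~\eqref{eq:fim}. I will also use the fact, implicit in Theorem~\ref{thm:crb}, that for $N_p\ge P+1$ distinct pilot positions the Vandermonde-type matrix $\bm{\Psi}$ has full column rank. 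Then $\bm{\Psi}^\top\bm{\Psi}$ is invertible and the estimator~\eqref{eq:cptr} is well defined. I will abbreviate $\mathbf{L}\triangleq(\bm{\Psi}^\top\bm{\Psi})^{-1}\bm{\Psi}^\top$, which satisfies $\mathbf{L}\bm{\Psi}=\mathbf{I}_{P+1}$.

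The first step is unbiasedness. Substituting the model gives $\hat{\bm{a}}=\mathbf{L}\bm{\Psi}\bm{a}+\mathbf{L}\bm{\varepsilon}=\bm{a}+\mathbf{L}\bm{\varepsilon}$. Taking expectations with $\E\{\bm{\varepsilon}\}=\bm{0}$ yields $\E\{\hat{\bm{a}}\}=\bm{a}$ for every $\bm{a}$.

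The second step is the covariance:
\begin{equation*}
\mathrm{Cov}(\hat{\bm{a}})=\mathbf{L}\,\E\{\bm{\varepsilon}\bm{\varepsilon}^\top\}\mathbf{L}^\top=\sigma_n^2(\bm{\Psi}^\top\bm{\Psi})^{-1}\bm{\Psi}^\top\bm{\Psi}(\bm{\Psi}^\top\bm{\Psi})^{-1}=\sigma_n^2(\bm{\Psi}^\top\bm{\Psi})^{-1}.
\end{equation*}
By~\eqref{eq:fim}, this equals $\mathbf{J}^{-1}(\bm{a})$. Its diagonal entries coincide with $\CRB(a_p)$ in~\eqref{eq:crb_coef}, and its normalized trace coincides with~\eqref{eq:crb_avg}.

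The final step is the MVU conclusion. By the Cram\'er--Rao inequality underlying Theorem~\ref{thm:crb}, every unbiased estimator $\check{\bm{a}}$ satisfies $\mathrm{Cov}(\check{\bm{a}})\succeq\mathbf{J}^{-1}(\bm{a})$. CPTR is unbiased and attains this bound with equality, so it has the minimum variance among unbiased estimators, uniformly in $\bm{a}$.

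As an optional cross-check, the score can be written as $\nabla_{\bm{a}}\ln p=\sigma_n^{-2}\bm{\Psi}^\top(\tilde{\bm{\phi}}-\bm{\Psi}\bm{a})=\mathbf{J}(\bm{a})(\hat{\bm{a}}-\bm{a})$. This is exactly the efficiency factorization condition, so it confirms attainment independently.

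The computations are routine, so the main obstacle is conceptual: stating clearly what the surrogate model assumes. Equality with the CRB requires the polynomial model to be exact, $\bm{a}$ to be treated as deterministic, and the linearized phase noise to be Gaussian. Under true WPN, the approximation residual of Proposition~\ref{prop:approx} would instead introduce a bias. For this reason I would state explicitly that optimality is claimed only within~\eqref{eq:linmod}.
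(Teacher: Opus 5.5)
Your proposal is correct and follows the same route as the paper, whose proof simply asserts $\E[\hat{\bm{a}}]=\bm{a}$ and cites the Slepian--Bangs/linear-model result that least squares attains the CRB. You carry out that standard computation explicitly: substitution, the sandwich covariance, the Cram\'er--Rao inequality and the score factorization. Reading $\bm{\varepsilon}$ as real $\mathcal{N}(\bm{0},\sigma_n^2\mathbf{I})$ rather than the paper's $\mathcal{CN}$ is the right choice, since the real model is the one consistent with the FIM $\sigma_n^{-2}\bm{\Psi}^\top\bm{\Psi}$ in~\eqref{eq:fim}.
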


\textit{Proof:}
Unbiasedness follows from
$\E[\hat{\bm{a}}]=\bm{a}$.
The covariance identity follows from the
Slepian--Bangs formula for linear-Gaussian models,
for which least-squares estimation attains the
CRB identically~\cite{Kay1993, Trees2001}.

\begin{remark}[Model Mismatch]
\label{rem:scope}
The MVU result applies only within the polynomial
surrogate model. For the true Wiener trajectory,
the residual approximation floor is
\begin{equation}\label{eq:floor}
  \sigma_\varepsilon^2
  \approx
  \frac{\kappa(P)\sigma_\Delta^2 N}{P+2},
\end{equation}
arising from the mismatch between the Wiener process and its finite-order polynomial approximation (Proposition~\ref{prop:approx}). Consequently, the CRB is tight only for coefficient estimation.

\end{remark}

\section{Simulation Results and Discussion}
\label{sec:sims}

All numerical simulations were conducted using MATLAB. Table~\ref{tab:params} summarizes the simulation parameters for a $300$~GHz THz OFDM system. Phase noise follows~\eqref{eq:wpn} with $c_0=10^{-18}$~s. The adopted pilot density $N_p/N=1/16$ provides a practical balance between estimation accuracy and spectral efficiency. Six methods are compared: no compensation (NC), single-rotation CPE, linear interpolation, cubic spline, EKF ($d=4$)~\cite{Rabiei2010}, and the CPTR ($P=3$).

Fig.~\ref{fig:traj} compares the reconstructed phase trajectories for the instantaneous WPN $\phi[n]$ at SNR~$=30$~dB. CPTR accurately captures the dominant low-frequency phase dynamics, whereas CPE cannot track intra-symbol variation. Linear interpolation introduces piecewise distortion between pilot pairs, while cubic spline exhibits mild boundary ringing. EKF also tracks the trajectory accurately, but with higher recursive complexity.

Fig.~\ref{fig:results}(a) illustrates trajectory MSE versus SNR. NC and CPE exhibit pronounced error floors due to uncompensated intra-symbol phase evolution. Linear interpolation and cubic spline substantially reduce the trajectory distortion, with linear interpolation achieving very low trajectory MSE at high SNR through dense local fitting. EKF also provides accurate recursive trajectory tracking, whereas CPTR exhibits a residual floor consistent with the approximation limit of Proposition~\ref{prop:approx}. To assess robustness under mixed Wiener--Gaussian phase noise, as characterized in~\cite{Khanzadi2014, Bello2023}, an independent Gaussian component with variance $\sigma_G^2=10^{-4}$~rad$^2$ is superimposed on the Wiener phase noise as an illustrative case. CPTR exhibits only a marginal increase in trajectory MSE.

\begin{table}[!t]
\renewcommand{\arraystretch}{1.15}
\caption{System Simulation Parameters ($10^4$ Monte Carlo trials)}
\label{tab:params}
\centering
\begin{tabular}{ll}
\toprule
\textbf{Parameter} & \textbf{Value} \\
\midrule
Carrier frequency $f_c$ / Bandwidth $B$ & $300\,\mathrm{GHz}$ / $20\,\mathrm{GHz}$ \\
FFT size $N$ / CP length $N_\mathrm{cp}$ / Pilots $N_p$ & $1024$ / $64$ / $64$ \\
Modulation / Polynomial order $P$ & $16$-QAM / $3$ (cubic) \\
Phase noise coeff. $c_0$ / $\sigma^2_\Delta$ & $10^{-18}\,\mathrm{s}$ / $1.78\times10^{-4}$~rad$^2$ \\
Intra-symbol excursion std & $\approx 0.43$~rad \\
\bottomrule
\end{tabular}
\end{table}

\begin{figure}[!t]
  \centering
  \includegraphics[width=\columnwidth]{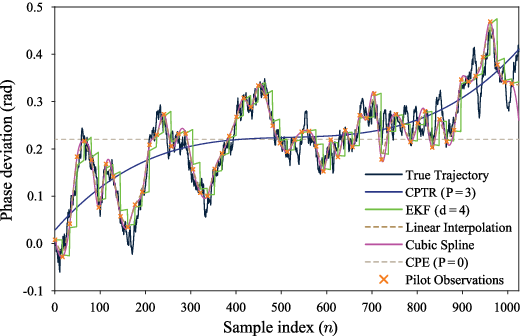}
  \caption{Intra-symbol phase-noise tracking over one OFDM symbol ($N=1024$) at SNR~$=30$~dB and $c_0=10^{-18}$~s. CPTR accurately reconstructs the phase trajectory. Linear interpolation introduces piecewise distortion, cubic spline exhibits mild boundary ringing, and CPE cannot track intra-symbol variation. EKF also achieves accurate tracking, but at higher complexity.}
  \label{fig:traj}
\end{figure}

Fig.~\ref{fig:results}(b) plots the coefficient MSE normalized to the CRB within the polynomial surrogate model, evaluating statistical estimation efficiency independently of the approximation error in Fig.~\ref{fig:results}(a). The CRB is computed for $P=3$, consistent with~\eqref{eq:crb_avg}. Per Remark~\ref{rem:coef}, non-parametric methods (linear interpolation, cubic spline) lack a global polynomial coefficient vector and are thus excluded from this comparison. CPTR remains tightly clustered around the CRB, with deviations below approximately $0.2$~dB across the simulated SNR range, indicating near-MVU efficiency within the polynomial surrogate model. EKF exhibits consistently larger deviations, reflecting sensitivity to process-noise covariance mismatch and recursive estimation errors, since it is designed for recursive trajectory tracking rather than optimality in the polynomial coefficient space defined by the surrogate model.

\begin{figure*}[!t]
  \centering
  \includegraphics[width=7.0in]{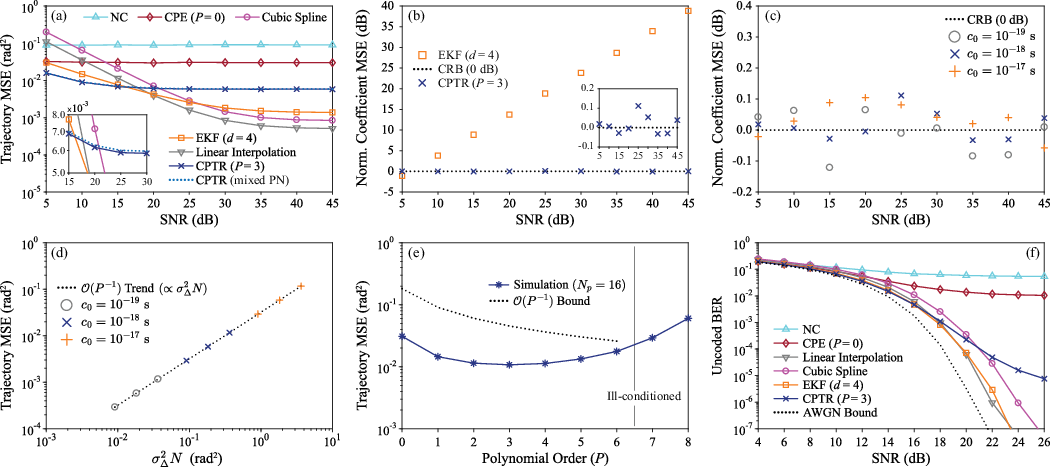}
  \caption{Simulation results at $f_c=300$~GHz, $N=1024$, $N_p=64$, 16-QAM, $c_0=10^{-18}$~s, unless otherwise stated. (a) Trajectory MSE vs.\ SNR; CPTR saturates at the approximation floor of Proposition~\ref{prop:approx}, while the dotted curve and inset show the marginal penalty under mixed Wiener--Gaussian phase noise (mixed PN) ($\sigma_G^2=10^{-4}$~rad$^2$). (b) Normalized coefficient MSE (dB above CRB) for EKF and CPTR (Remark~\ref{rem:coef}); CPTR stays within 0.2~dB of the CRB across all SNRs. (c) Normalized coefficient MSE of CPTR for different phase-noise severities $c_0$. (d) Trajectory MSE of CPTR vs.\ $\sigma_\Delta^2 N$, verifying the predicted $\mathcal{O}(\sigma_\Delta^2 N)$ scaling. (e) Trajectory MSE vs.\ polynomial order $P$ at $N_p=16$ and SNR~$=15$~dB. MSE decreases up to $P=3$, then increases for $P\ge6$ due to Vandermonde ill-conditioning; dashed line: asymptotic $\mathcal{O}(P^{-1})$ bound. (f) Uncoded BER vs.\ SNR; CPTR substantially outperforms CPE with a residual high-SNR floor from polynomial approximation error.}
    
  \label{fig:results}
\end{figure*}

Fig.~\ref{fig:results}(c) illustrates the evaluation of CPTR across $c_0\in\{10^{-19},10^{-18},10^{-17}\}$~s. The normalized coefficient MSE remains tightly clustered around the CRB, with deviations below approximately $\pm0.1$~dB and no systematic dependence on phase-noise severity, indicating that CPTR's coefficient-domain efficiency is largely insensitive to the Wiener innovation variance. Fig.~\ref{fig:results}(d) further demonstrates the predicted $\mathcal{O}(\sigma_\Delta^2N)$ scaling of the residual approximation floor across bandwidths and linewidth coefficients.

Fig.~\ref{fig:results}(e) shows trajectory MSE versus polynomial order. MSE decreases sharply from $P=0$ to $P=3$ as the polynomial surrogate captures progressively richer intra-symbol phase dynamics. For $P>3$, the MSE increases due to amplified estimation noise and ill-conditioning of the Vandermonde matrix under limited pilot density. This behavior reflects the bias--variance trade-off of polynomial phase reconstruction, with $P=3$ providing the best balance between approximation accuracy and numerical robustness. Increasing $P$ improves approximation accuracy but worsens Vandermonde conditioning, increasing LS sensitivity to pilot noise.

Fig.~\ref{fig:results}(f) shows uncoded BER for 16-QAM. Without compensation, NC and CPE exhibit substantial BER floors caused by intra-symbol phase drift, while linear interpolation, spline interpolation, and EKF approach the AWGN bound at high SNR. CPTR achieves BER~$=4.7\times10^{-3}$ at SNR~$=16$~dB, a $5.0\times$ reduction relative to CPE ($2.36\times10^{-2}$) at identical pilot overhead, but exhibits a residual floor of approximately $2.3\times10^{-4}$ at SNR~$\ge20$~dB, consistent with~\eqref{eq:floor}. This behavior reflects the trade-off between closed-form low-complexity estimation and surrogate-model mismatch.
 
\section{Conclusion}
\label{sec:conclusion}

This letter presented CPTR, a closed-form intra-symbol phase-noise tracking method for THz OFDM systems. The key insight is that mitigating WPN at THz carriers requires continuous sample-level phase reconstruction within each OFDM symbol, rather than conventional per-symbol CPE correction. By modeling the phase trajectory with a low-order polynomial and solving a single least-squares system of dimension $(P{+}1)\!\times\!(P{+}1)$, CPTR achieves $\mathcal{O}(N_p+N)$ complexity while remaining within 0.2~dB of the CRB in the polynomial coefficient space. At 300~GHz with $N=1024$ subcarriers and 16-QAM, CPTR attains substantially lower complexity than EKF-based methods and significant BER improvement over conventional CPE compensation. A limitation of the proposed framework is the residual high-SNR error floor caused by the finite-order polynomial approximation of the Wiener phase trajectory. Extending the framework to mixed Wiener--Gaussian phase-noise models and exploring adaptive basis design constitute directions for future work.

\appendix
\textit{Proof of Proposition~\ref{prop:approx}:} The WPN trajectory
$\phi[n]=\phi[0]+\sum_{i=1}^{n}\delta[i]$
is a zero-drift Wiener process on $[0,N-1]$ with covariance
$K(n,m)=\min(n,m)\sigma^{2}_{\Delta}$.
Its Karhunen--Lo\`eve (KL) expansion over a finite interval
admits eigenvalues satisfying
$\lambda_p=\mathcal{O}(p^{-2})$~\cite{Papoulis2002}.
Consequently, the truncation error of the optimal rank-$(P+1)$
KL approximation satisfies
\begin{equation}\label{eq:kl_err}
  \sigma^{2}_{\varepsilon,\mathrm{KL}}
  =\sum_{p=P+1}^{\infty}\lambda_p
  \;\lesssim\;
  C(P+1)^{-1}\sigma^{2}_{\Delta}N,
\end{equation}
which decays algebraically as $\mathcal{O}(P^{-1})$.

Since the Legendre and monomial bases span the same degree-P subspace, their projections are related by a nonsingular linear transformation; norm equivalence in finite dimensions implies the projection errors differ only by an N-independent conditioning constant. The monomial projection thus preserves the $\mathcal{O}(P^{-1})$ decay of the KL truncation error.



\begin{thebibliography}{34}

\bibitem{Han2022}
C. Han et al., ``Terahertz Wireless Channels: A Holistic Survey on Measurement, Modeling, and Analysis,'' \textit{IEEE Commun. Surv. Tutor.}, vol. 24, no. 3, pp. 1670--1707, thirdquarter 2022.

\bibitem{Sharma2025}
S. Sharma, P. K. Singya, K. Deka, C. Adjih, and M. Sharma, ``Terahertz Communication: State-of-the-Art and Future Directions,'' \textit{IEEE Open J. Commun. Soc.}, vol. 6, pp. 6281--6322, 2025.

\bibitem{Akyildiz2020}
I. F. Akyildiz, A. Kak, and S. Nie, ``6G and Beyond: The Future of Wireless Communications Systems,'' \textit{IEEE Access}, vol. 8, pp. 133995--134030, 2020.

\bibitem{Rappaport2019}
T. S. Rappaport et al., ``Wireless Communications and Applications Above 100 GHz: Opportunities and Challenges for 6G and Beyond,'' \textit{IEEE Access}, vol. 7, pp. 78729--78757, 2019.

\bibitem{Sarieddeen2021}
H. Sarieddeen, M. -S. Alouini, and T. Y. Al-Naffouri, ``An Overview of Signal Processing Techniques for Terahertz Communications,'' \textit{Proc. IEEE}, vol. 109, no. 10, pp. 1628--1665, Oct. 2021.

\bibitem{Khanzadi2014}
M. R. Khanzadi, D. Kuylenstierna, A. Panahi, T. Eriksson, and H. Zirath, ``Calculation of the Performance of Communication Systems From Measured Oscillator Phase Noise,'' \textit{IEEE Trans. Circuits Syst. I, Reg. Papers.}, vol. 61, no. 5, pp. 1553--1565, May 2014.

\bibitem{Bicais2019}
S. Bicais and J. -B. Dore, ``Phase Noise Model Selection for Sub-THz Communications,'' in \textit{Proc. 2019 IEEE Global Communications Conference (GLOBECOM)}, Waikoloa, HI, USA, 2019, pp. 1--6.

\bibitem{Pollet1995}
T. Pollet, M. Van Bladel, and M. Moeneclaey, ``BER sensitivity of OFDM systems to carrier frequency offset and Wiener phase noise,'' \textit{IEEE Trans. Commun.}, vol. 43, no. 2/3/4, pp. 191--193, Feb./March/April 1995.

\bibitem{Wu2004}
S. Wu and Y. Bar-Ness, ``OFDM systems in the presence of phase noise: consequences and solutions,'' \textit{IEEE Trans. Commun.}, vol. 52, no. 11, pp. 1988--1996, Nov. 2004.

\bibitem{Petrovic2007}
D. Petrovic, W. Rave, and G. Fettweis, ``Effects of Phase Noise on OFDM Systems With and Without PLL: Characterization and Compensation,'' \textit{IEEE Trans. Commun.}, vol. 55, no. 8, pp. 1607--1616, Aug. 2007.

\bibitem{Stefanatos2017}
S. Stefanatos, F. Foukalas, and T. Khattab, ``On the Achievable Rates of OFDM With Common Phase Error Compensation in Phase Noise Channels,'' \textit{IEEE Trans. Commun.}, vol. 65, no. 8, pp. 3509--3521, Aug. 2017.

\bibitem{Corvaja1997}
R. Corvaja and S. Pupolin, ``Phase noise effects in QAM systems,'' in \textit{Proc. 8th International Symposium on Personal, Indoor and Mobile Radio Communications - PIMRC '97}, Helsinki, Finland, 1997, pp. 452-456 vol.2.

\bibitem{Giannakis1998}
G. B. Giannakis and C. Tepedelenlioglu, ``Basis expansion models and diversity techniques for blind identification and equalization of time-varying channels,'' \textit{Proc. IEEE}, vol. 86, no. 10, pp. 1969--1986, Oct. 1998.

\bibitem{Do2025}
H. Do, N. Lee and A. Lozano, ``Multidimensional Polynomial Phase Estimation,'' \textit{IEEE open j. signal process.}, vol. 6, pp. 651--681, 2025.

\bibitem{Rabiei2010}
P. Rabiei, W. Namgoong, and N. Al-Dhahir, ``A Non-Iterative Technique for Phase Noise ICI Mitigation in Packet-Based OFDM Systems,'' \textit{IEEE Trans. Signal Process.}, vol. 58, no. 11, pp. 5945--5950, Nov. 2010.

\bibitem{Mehrpouyan2012}
H. Mehrpouyan, A. A. Nasir, S. D. Blostein, T. Eriksson, G. K. Karagiannidis, and T. Svensson, ``Joint Estimation of Channel and Oscillator Phase Noise in MIMO Systems,'' \textit{IEEE Trans. Signal Process.}, vol. 60, no. 9, pp. 4790-4807, Sep. 2012.

\bibitem{Mathecken2017}
P. Mathecken, T. Riihonen, S. Werner, and R. Wichman, ``Constrained Phase Noise Estimation in OFDM Using Scattered Pilots Without Decision Feedback,'' \textit{IEEE Trans. Signal Process.}, vol. 65, no. 9, pp. 2348--2362, May 2017.

\bibitem{Chen2025}
D. Chen, L. Song, P. Liu, K. Luo, W. Peng, and W. Wang, ``Phase Noise Estimation and Pilot Design Suppressing Intrinsic Interference for mmWave FBMC-OQAM Systems,'' \textit{IEEE Trans. Commun.}, vol. 73, no. 11, pp. 12087--12099, Nov. 2025.

\bibitem{Chang2025}
Z. Chang, Y. Xu, J. Chen, N. Xie, Y. He, and H. Li, ``Modeling, Estimation, and Applications of Phase Noise in Wireless Communications: A Survey,'' \textit{IEEE Commun. Surv. Tutor.}, vol. 27, no. 2, pp. 912--940, Apr. 2025.

\bibitem{Bello2023}
Y. Bello, J. -B. Doré, and D. Demmer, ``Wiener Interpolation Filter for Phase Noise Estimation in sub-THz Transmission,'' in \textit{Proc. IEEE 97th Veh. Technol. Conf. (VTC2023-Spring)}, Florence, Italy, 2023, pp. 1--5.

\bibitem{Tarboush2021}
S. Tarboush et al., ``TeraMIMO: A Channel Simulator for Wideband Ultra-Massive MIMO Terahertz Communications,'' \textit{IEEE Trans. Veh. Technol.}, vol. 70, no. 12, pp. 12325--12341, Dec. 2021.

\bibitem{Kay1993}
S.~M.~Kay, {\it{Fundamentals of Statistical Signal Processing, Vol.~I: Estimation Theory}}. NJ, USA, Prentice Hall, 1993.

\bibitem{Trees2001}
H. L. Van Trees, {\it{Detection, Estimation, and Modulation Theory, Part~I}}. NJ, USA, Wiley-Interscience, 2001.

\bibitem{Papoulis2002}
A.~Papoulis and S.~U.~Pillai, {\it{Probability, Random Variables, and Stochastic Processes}}, 4th~ed. New York, USA, McGraw-Hill, 2002.

\end{thebibliography}
\end{document}